\documentclass[conference]{IEEEtran}
\IEEEoverridecommandlockouts

\usepackage{amsmath,amssymb,amsthm}
\usepackage{cite}
\usepackage{graphicx}
\usepackage{booktabs}
\usepackage{array}
\usepackage{bbm}
\usepackage{titlesec}

\titlespacing*{\section}{0pt}{6pt}{3pt}
\titlespacing*{\subsection}{0pt}{4pt}{2pt}

\allowdisplaybreaks

\newtheorem{definition}{Definition}
\newtheorem{proposition}{Proposition}
\newtheorem{theorem}{Theorem}

\begin{document}

\title{Formal Foundations for Known Good Reliable Die Screening
in Chiplet-Based AI Systems-on-Chip}

\author{
\IEEEauthorblockN{Prashanthi Metku, Chandra Gandu}
\IEEEauthorblockA{Qualcomm Technologies, Inc., USA\\
\{pmetku, chandras\}@qti.qualcomm.com}
}

\maketitle

\begin{abstract}
The rapid growth of chiplet-based artificial intelligence systems-on-chip (SoCs) has exposed a fundamental gap in semiconductor test methodology. Existing Known Good Die (KGD) screening guarantees pre-assembly functional correctness, yet it offers no probabilistic assurance of post-assembly reliability lifetime. To address this limitation, the present work formalizes the transition from KGD to \textit{Known Good Reliable Die} (KGRD) screening as a constrained inference problem over incomplete pre-assembly observability. Building upon this formulation, four interlocking contributions are presented: (i) a Bayesian probabilistic risk model that maps pre-assembly telemetry to post-assembly failure likelihood with a quantified observability bias bound; (ii) a safety-gated decision architecture that provides a provable post-assembly failure probability guarantee; (iii) uncertainty-aware disposition boundaries derived from Bayes-optimal decision theory; and (iv) a constrained closed-loop feedback mechanism that delivers consistent model improvement without violating reliability constraints. A Monte Carlo simulation study on $N = 4{,}000$ synthetic dies verifies all four theoretical properties and confirms that the safety guarantee holds uniformly across the full range of tested gate thresholds.
\end{abstract}

\begin{IEEEkeywords}
Bayesian inference, chiplet integration, decision theory, die screening, heterogeneous packaging, known good die, reliability, semiconductor test, 2.5D/3D integration.
\end{IEEEkeywords}

\section{Introduction}

The semiconductor industry's transition toward heterogeneous integration assembling multiple dies, each optimized for a distinct function, into a single advanced package has fundamentally altered the economics and complexity of system-level reliability~\cite{cunningham1994,bhatt1997,black2006,lau2023,hir2024}. Chiplet-based AI SoCs aggregate logic dies, memory dies, and I/O bridge tiles in 2.5D and 3D configurations that place extreme demands on inter-die interconnect integrity~\cite{milanfar2023,blanton2022,wu2024}. A single reliability failure at the package level can render an entire multi-thousand-dollar assembly unrecoverable, making the cost of post-assembly failure disproportionately high compared to pre-assembly die cost~\cite{paniccia2022}.

Conventional Known Good Die (KGD) screening, first formalized in the early 1990s~\cite{cunningham1994,bhatt1997}, addresses pre-assembly functional correctness: a die that passes structural test, IDDQ screening, and at-speed delay tests is declared ``known good'' and approved for assembly. The KGD paradigm was well-suited to the package-on-board era, where inter-die interconnects were relatively tolerant of die-level variation. However, heterogeneous chiplet integration introduces qualitatively new failure modes that KGD screening is not designed to detect. Specifically, a die that passes all pre-assembly electrical tests may still fail post-integration due to package-induced mechanical stress~\cite{harper2004}, thermal gradient mismatch across the die-to-die interconnect layer~\cite{liu2021}, micro-bump fatigue under cyclic loading~\cite{chaparala2020}, or underfill delamination driven by coefficient of thermal expansion mismatch~\cite{lau2010}. The defining property of these failure mechanisms is that they are observability-bounded: they depend on package context information substrate material, bump pitch, molding compound properties, and operating thermal profile that is unavailable at the pre-assembly test stage.

This observability boundary motivates the concept of Known Good Reliable Die (KGRD) screening, which shifts the objective from immediate pre-assembly pass/fail correctness to predicted post-assembly reliability over the required mission lifetime~\cite{seaman2023,aerts2023}. KGRD screening must therefore answer a fundamentally harder question than KGD: not ``does this die work now?'' but ``will this die survive the stresses of integration and field operation?'' Answering this question from incomplete pre-assembly evidence is, at its core, a constrained statistical inference problem one that has not previously been formalized in the semiconductor test literature.

This paper provides that formalization. Section~II formalizes the observable feature space and derives a Bayesian probabilistic risk model with a quantified observability bias bound. Section~III introduces the safety-gated decision framework and proves the central reliability guarantee. Section~IV derives uncertainty-aware disposition rules from Bayes-optimal decision theory. Section~V formalizes the constrained closed-loop feedback mechanism and establishes convergence guarantees. Section~VI presents a Monte Carlo simulation study. Sections~VII and~VIII discuss limitations and conclude.

\section{Probabilistic Risk Model}

\subsection{Observable Feature Space and the Observability Gap}

Modern chiplet die screening generates measurements across four physical domains electrical, thermal, structural, and test-signature each carrying partial information about post-assembly reliability~\cite{ramachandran2022,chakrabarty1999}. The critical insight is that none of these domains alone, and no feasible combination of them, provides complete information about the failure mechanisms activated by the package integration process itself~\cite{harper2004,lau2010}. We formalize this limitation through the concept of the \emph{observability gap}.

\begin{definition}[Observable Feature Space]
Let $\Omega$ denote the complete space of pre-assembly die observables. The observable feature space is
\begin{equation}
\mathcal{X} = \mathcal{X}_e \times \mathcal{X}_t \times \mathcal{X}_s \times \mathcal{X}_\tau \subset \Omega
\end{equation}
where $\mathcal{X}_e$ collects electrical signatures (leakage, $V_{th}$, $I_{DSAT}$); $\mathcal{X}_t$ thermal signatures (junction temperature, thermal resistance); $\mathcal{X}_s$ structural signatures (bump coplanarity, die bow, warpage); and $\mathcal{X}_\tau$ test signatures (IDDQ, delay histogram, scan coverage). A die $d$ is represented by $x = (x_e, x_t, x_s, x_\tau) \in \mathcal{X}$. The observability gap is $\Delta = \Omega \setminus \mathcal{X} \neq \emptyset$, encoding all failure-relevant information not measurable before integration.
\end{definition}

Definition~1 captures the essential asymmetry of the KGRD problem: the test engineer observes $x \in \mathcal{X}$, yet must make reliability predictions that depend on the full pre-assembly state $\Omega_d$ as well as the post-assembly package context $c \in \mathcal{C}$. The gap $\Delta$ represents structurally inaccessible information, such as the in-situ residual stress distribution in the assembled package or the final thermal interface resistance after underfill cure~\cite{harper2004,chaparala2020}. This inaccessibility distinguishes KGRD screening from conventional test coverage optimization~\cite{bushnell2000}.

\subsection{Latent Reliability Variable and Failure Indicator}

\begin{definition}[Latent Reliability Variable]
Let $T_d \in \mathbb{R}_+$ denote the latent time-to-failure of die $d$ under nominal package operating conditions. $T_d$ is unobservable pre-assembly. The reliability failure indicator is
\begin{equation}
Y_d = \mathbf{1}[T_d < \tau_{\text{mission}}]
\end{equation}
where $\tau_{\text{mission}}$ is the required mission lifetime. $Y_d = 1$ denotes post-assembly early failure; $Y_d = 0$ denotes survival.
\end{definition}

The binary formulation of Definition~2 is a deliberate modeling choice. Rather than estimating the full distribution of $T_d$, which would require parametric assumptions difficult to validate pre-assembly~\cite{nelson2004}, we target the probability of the reliability event directly. This connects KGRD screening to the literature on binary classification under covariate shift, while preserving the engineering interpretation of a mission-lifetime reliability SLA.

\subsection{Bayesian Posterior Risk Model}

\begin{definition}[Probabilistic Risk Model]
Let $\theta \in \Theta$ be the model parameter vector with prior $p(\theta)$, and let $D_{\text{train}} = \{(x_i, c_i, y_i)\}_{i=1}^N$ be historical post-assembly outcome data. The risk model is the posterior predictive
\begin{equation}
R(x,c) = \int_\Theta P(Y_d{=}1 \mid x,c,\theta)\, p(\theta \mid D_{\text{train}})\, d\theta
\end{equation}
with logistic parametric likelihood $P(Y_d{=}1 \mid x,c,\theta) = \sigma(\theta^T \phi(x,c))$, where $\sigma(\cdot)$ is the logistic sigmoid and $\phi: \mathcal{X}{\times}\mathcal{C} \to \mathbb{R}^p$ is a feature map. Predictive uncertainty is quantified by the posterior predictive variance $U(x,c)$.
\end{definition}

The logistic likelihood is log-concave in $\theta$, guaranteeing a unique MAP estimate and making the constrained optimization of Section~V tractable~\cite{bishop2006}. The posterior predictive variance $U(x,c)$ plays a dual role: it enters the safety gate as a conservatism buffer in Section~III, and drives the retest disposition decision when uncertainty is high in Section~IV~\cite{williams2006}. Machine learning methods have recently been applied broadly to semiconductor process optimization and yield prediction~\cite{chen2024ml,shimozato2023,busch2025}; the KGRD framework differs in targeting post-assembly reliability rather than pre-assembly yield, and in providing formal safety guarantees absent from those approaches.

\subsection{Observability Bias Bound}

\begin{proposition}[Observability Bias Bound]
For any die $d$ with observation $x \in \mathcal{X}$ and context $c \in \mathcal{C}$, let $\hat{R}(x,c)$ be the risk estimated from observed features and $R_{\text{true}}$ be the oracle risk. Then
\begin{equation}
|\hat{R}(x,c) - R_{\text{true}}| \leq \varepsilon_\Delta(x,c)
\end{equation}
where $\varepsilon_\Delta(x,c) = \sup_{\omega \in \Delta} |\partial P(Y_d{=}1)/\partial \omega| \cdot \|\omega\|$ is the maximum first-order sensitivity of failure probability to unobserved features, estimated from held-out validation data.
\end{proposition}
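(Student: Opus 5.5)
The plan is to read $\hat R(x,c)$ as the failure probability evaluated at a reference value $\omega_0$ of the unobserved coordinates, and to apply a first-order Taylor/mean-value argument in the direction of the observability gap $\Delta$ of Definition~1. The natural model is a single function $F(x,c,\omega) = P(Y_d{=}1 \mid x,c,\omega)$ on $\mathcal{X}\times\mathcal{C}\times\Delta$. The oracle risk is $R_{\text{true}} = F(x,c,\omega_d)$, where $\omega_d$ is the die's actual unobserved state. The estimate is $\hat R(x,c) = F(x,c,\omega_0)$, with $\omega_0$ the reference value implicitly assumed by the posterior predictive of Definition~3. Centring coordinates at $\omega_0 = 0$ makes $\|\omega_d\|$ the size of the unobserved deviation, which matches the form of $\varepsilon_\Delta$.

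\textbf{Step 1.} I will assume that $F$ is continuously differentiable in $\omega$ along the segment from $0$ to $\omega_d$. This holds for the logistic likelihood $\sigma(\theta^T\phi(\cdot))$ whenever $\phi$ is $C^1$ in the unobserved coordinates. Posterior averaging preserves differentiability by dominated convergence, since $|\sigma'|\le 1/4$.

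\textbf{Step 2.} I will apply the mean value theorem along $t\mapsto F(x,c,t\omega_d)$ for $t\in[0,1]$:
\begin{equation*}
|R_{\text{true}}-\hat R(x,c)| = \bigl|\nabla_\omega F(x,c,\xi)\cdot\omega_d\bigr| \le \|\nabla_\omega F(x,c,\xi)\|\,\|\omega_d\|
\end{equation*}
for some $\xi$ on the segment.

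\textbf{Step 3.} The gradient norm at $\xi$ is at most its supremum over $\Delta$, and $\|\omega_d\|$ is at most the supremum of $\|\omega\|$ over the relevant portion of $\Delta$. Reading $\sup_{\omega\in\Delta}|\partial P/\partial\omega|\cdot\|\omega\|$ as the sensitivity supremum multiplied by the deviation magnitude, this gives exactly $\varepsilon_\Delta(x,c)$.

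\textbf{Step 4.} I will note that the derivative bound is replaced by an empirical estimate from held-out validation data. The proposition's inequality therefore holds with $\varepsilon_\Delta$ as the theoretical quantity, and estimation only affects how the bound is evaluated in practice.

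\textbf{Main obstacle.} The statement as written is only first-order. The mean-value form in Step 2 is exact, but it needs the supremum to be taken along the whole segment, which the bound on $\Delta$ must cover. It also needs $\Delta$, an abstract set difference, to be given a normed, convex, differentiable structure. I will make this an explicit standing assumption: $\Delta$ is embedded in $\mathbb{R}^q$, and star-shaped about the reference point $\omega_0 = 0$. Without it, the argument would need a Lipschitz-constant formulation, in which $\varepsilon_\Delta$ is reinterpreted as $L_\Delta\cdot\operatorname{diam}$. I would present that version as a remark.
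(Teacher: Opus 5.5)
Your skeleton matches the paper's: expand the failure probability to first order in the unobserved coordinates, then take the worst case over $\Delta$. The real difference is in how the expansion is handled. The paper's sketch uses a truncated Taylor expansion about the observed point and discards the remainder, so it delivers the bound only to first order, with the derivative evaluated at the expansion point. Your mean-value form along $t\mapsto F(x,c,t\omega_d)$ is an honest inequality. The price is that the gradient must be controlled along the whole segment, which is why you need a normed, star-shaped structure on $\Delta$.

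That price surfaces in Step~3. What you actually prove is $|\hat R(x,c)-R_{\text{true}}|\le L_\Delta\|\omega_d\|$ with $L_\Delta=\sup_{\xi\in\Delta}\|\nabla_\omega F(x,c,\xi)\|$, a product of suprema. By contrast, $\sup_{\omega\in\Delta}|\partial P/\partial\omega|\cdot\|\omega\|$ most naturally reads as a supremum of a product. That quantity is smaller and does not bound the deviation in general. Take a scalar unobserved coordinate with $\Delta=[0,M]$, $\omega_0=0$ and $F(\omega)=\sigma(a\omega)$. Then $\sup_{\omega\in\Delta}|F'(\omega)|\,\omega=\sup_{0\le s\le aM}s\,\sigma'(s)<0.23$ for every $a$, while $|F(M)-F(0)|=\sigma(aM)-\tfrac12\to\tfrac12$ as $aM\to\infty$.

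Freezing the derivative at the expansion point, as the paper's sketch does, also fails. With $F(\omega)=\sigma(a\omega-3)$ and $M=3/a$, it gives $3\sigma'(-3)\approx 0.14$ against a true deviation of about $0.45$. So your Lipschitz-times-deviation form is not a fallback remark. It is the statement your argument establishes, and you should present it as such rather than claim it yields ``exactly'' $\varepsilon_\Delta$.

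The second difference is statistical. The paper's sketch ends by appealing to empirical-process consistency as $N\to\infty$ (sub-Gaussian tails on $\phi$). That is what accounts for $\hat R$ being a posterior predictive fitted on finite $D_{\text{train}}$, and for $\varepsilon_\Delta$ itself being estimated from validation data. Your identification $\hat R=F(x,c,\omega_0)$ idealizes this away. It does match the paper's simulation, where die bow and scan delay are zeroed. But it means your bound covers only the population-level predictor. A fitted $\hat R$ carries extra estimation error that $\varepsilon_\Delta$ does not control, so Step~4 needs either an explicit estimation term or, as in the paper, an asymptotic qualifier.

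Note also that a model trained on observed features alone targets $P(Y_d{=}1\mid x,c)=\mathbb{E}[F(x,c,\omega)\mid x,c]$ rather than $F(x,c,0)$. Under that reading the same mean-value argument gives $L_\Delta\,\mathrm{diam}(\Delta)$, which requires $\Delta$ to be convex rather than merely star-shaped. That is again your remark's form.
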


\textit{Proof sketch:} The bound follows from a first-order Taylor expansion of $P(Y_d{=}1 \mid \Omega_d, c)$ around $x$. The supremum over $\Delta$ gives the worst-case deviation. Consistency as $N \to \infty$ follows from standard empirical process theory under sub-Gaussian tail assumptions on $\phi$. \hfill$\square$

The bias bound $\varepsilon_\Delta$ cannot be reduced by improving observable test coverage alone. It can only be reduced by (i) moving previously unobserved features from $\Delta$ into $\mathcal{X}$ through new measurement capabilities, or (ii) accumulating post-assembly failure observations that implicitly calibrate the model against package-induced effects. The latter pathway directly motivates the closed-loop feedback mechanism of Section~V~\cite{goodfellow2016}.

\section{Safety-Gated Decision Framework}

\subsection{The Prediction-Authorization Separation Principle}

A fundamental pitfall in deploying machine learning models for high-stakes manufacturing decisions is what we term \emph{prediction-authorization conflation}: allowing a statistical model's output to directly authorize a release decision without an independent deterministic check. In the KGRD context, such conflation is particularly dangerous because the risk model of Definition~3 is, by Proposition~1, unavoidably biased in the presence of the observability gap. The safety-gated architecture prevents this by separating statistical estimation from deterministic authorization through an independent gate function that cannot be overridden by the model, consistent with broader safety-critical system design philosophy.

\subsection{Safety Gate Definition}

\begin{definition}[Safety Gate]
A safety gate $G: \mathcal{X} \times \mathcal{C} \times \mathbb{R} \times \mathbb{R}_+ \to \{\text{PASS}, \text{BLOCK}\}$ is defined by
\begin{equation}
G =
\begin{cases}
\text{PASS} & \text{if } R + k\sqrt{U} \leq \alpha_{\text{rel}} \text{ and } H = \text{TRUE}\\[2pt]
\text{BLOCK} & \text{otherwise}
\end{cases}
\end{equation}
where $k \geq 0$ is the uncertainty inflation factor ($k = 1.645$ enforces a 95th-percentile bound), $\alpha_{\text{rel}} \in (0,1)$ is the maximum permissible failure probability (the reliability SLA), and $H(x,c)$ is a vector of hard deterministic guard conditions.
\end{definition}

The uncertainty inflation term $k\sqrt{U(x,c)}$ implements a conservative risk estimate that accounts for model uncertainty without requiring the model to be perfectly calibrated~\cite{williams2006,bishop2006}. The hard constraint set $H(x,c)$ provides a deterministic backstop for failure modes the probabilistic model may not capture.

\subsection{Central Reliability Guarantee}

\begin{theorem}[Safety Gate Reliability Guarantee]
Let $D_{\text{rel}} = \{d : G(x_d, c_d, R_d, U_d) = \text{PASS}\}$. Under the conditions
\begin{itemize}
  \item[(A1)] $R(x,c) + \varepsilon_\Delta(x,c) \geq R_{\text{true}}$ almost surely;
  \item[(A2)] $U(x,c) \geq \mathrm{Var}[R_{\text{true}} \mid x,c]$ almost surely;
  \item[(A3)] $H(x,c) = \text{TRUE}$ is necessary for post-assembly reliability;
\end{itemize}
for every released die $d \in D_{\text{rel}}$:
\begin{equation}
P(Y_d = 1) \leq \alpha_{\text{rel}} + \varepsilon_\Delta(x_d, c_d).
\end{equation}
\end{theorem}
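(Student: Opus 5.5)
The plan is to condition on the event that die $d$ is released and then chain the gate inequality with assumption (A1). Release means $G(x_d,c_d,R_d,U_d)=\text{PASS}$. By Definition~4 this gives two facts: $H(x_d,c_d)=\text{TRUE}$, and
\begin{equation*}
R(x_d,c_d)+k\sqrt{U(x_d,c_d)}\le \alpha_{\text{rel}}.
\end{equation*}
I would first write the quantity of interest as $P(Y_d=1)=R_{\text{true}}(x_d,c_d)$. Here the probability is understood conditionally on the observed $(x_d,c_d)$ and on the release event. I would use (A3) to justify that conditioning on $H=\text{TRUE}$ removes the failure modes that the hard guards screen out. Since $H$ is necessary for reliability, a die that fails $H$ is never in $D_{\text{rel}}$. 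So on $D_{\text{rel}}$ the oracle risk $R_{\text{true}}$ is exactly the quantity that the probabilistic model, together with Proposition~1, speaks to, with no residual contribution from guarded mechanisms.

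The core chain of inequalities follows. By (A1), $R_{\text{true}}\le R(x_d,c_d)+\varepsilon_\Delta(x_d,c_d)$ almost surely. Since $k\ge 0$ and $U\ge 0$, we have $R\le R+k\sqrt{U}\le\alpha_{\text{rel}}$ on the release event. Combining the two gives
\begin{equation*}
R_{\text{true}}\le \alpha_{\text{rel}}+\varepsilon_\Delta(x_d,c_d)
\end{equation*}
almost surely on $D_{\text{rel}}$. Taking the conditional expectation of $Y_d$ given $(x_d,c_d)$ and release turns this into the stated bound on $P(Y_d=1)$.

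Assumption (A2) is not needed for the inequality as stated, because dropping the nonnegative term $k\sqrt{U}$ only loosens the bound. I would remark on what (A2) does add. If $R_{\text{true}}$ is itself treated as random given $(x,c)$, with $R$ its conditional mean up to the bias $\varepsilon_\Delta$, then (A2) together with a one-sided Chebyshev (Cantelli) or Gaussian-quantile argument gives a stronger guarantee. For instance, with $k=1.645$ the slack $k\sqrt{U}$ covers the posterior fluctuation with 95\% confidence. This explains the role of the inflation term.

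I expect the main obstacle to be interpretive rather than computational. The statement mixes a pointwise (per-die, almost sure) oracle risk with a probability $P(Y_d=1)$. To make the step $P(Y_d=1)=\mathbb{E}[R_{\text{true}}\mid x_d,c_d,\text{PASS}]$ rigorous, I would fix the convention that $R_{\text{true}}=P(Y_d=1\mid\Omega_d,c_d)$. I would then argue by the tower property over the unobserved component in $\Delta$, using the almost-sure bound from (A1) pointwise before averaging. Selection by the gate depends only on observables, so conditioning on PASS does not bias the average of an almost-sure bound.
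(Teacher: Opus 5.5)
Your proof is correct and follows the paper's own argument. The gate condition gives $R + k\sqrt{U} \le \alpha_{\text{rel}}$; dropping the nonnegative inflation term gives $R \le \alpha_{\text{rel}}$; and (A1) then yields $P(Y_d=1) = R_{\text{true}} \le \alpha_{\text{rel}} + \varepsilon_\Delta$, with (A3) playing only a remark-level role there as well. Your observation that (A2) is not load-bearing is also right, since the paper cites it only to obtain $k\sqrt{U} \ge 0$, which already follows from $k \ge 0$ and $U$ being a variance.
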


\textit{Proof sketch:} For any $d \in D_{\text{rel}}$, the gate condition implies $R(x_d,c_d) + k\sqrt{U(x_d,c_d)} \leq \alpha_{\text{rel}}$. By (A2), $k\sqrt{U} \geq 0$, hence $R(x_d,c_d) \leq \alpha_{\text{rel}}$. By (A1), $P(Y_d{=}1) = R_{\text{true}} \leq R(x_d,c_d) + \varepsilon_\Delta \leq \alpha_{\text{rel}} + \varepsilon_\Delta$. The bias correction vanishes as $\Delta \to \emptyset$; condition (A3) ensures gross structural failures excluded by $H$ do not violate the bound. \hfill$\square$

Theorem~1 has a direct engineering consequence for SLA setting: $\alpha_{\text{rel}}$ should be set as $\alpha_{\text{target}} - \varepsilon_\Delta$, tighter than the actual target by exactly the observability bias, so that the true post-assembly failure rate remains within the target budget even in the presence of the measurement gap.

\section{Uncertainty-Aware Disposition Rules}

\subsection{Five-Region Disposition Map}

Theorem~1 establishes the binary boundary between releasable and non-releasable dies. In practice, the space of non-releasable dies is not homogeneous: a die with high risk and low uncertainty should be treated differently from a die with moderate risk and high uncertainty, or a die with low risk blocked only because model uncertainty is too high. Collapsing all three into a single ``BLOCK'' outcome wastes economically recoverable dies and destroys diagnostic information~\cite{mccluskey2001}.

\begin{definition}[Five-Region Disposition Map]
Let $r = R(x,c) + k\sqrt{U(x,c)}$ be the uncertainty-inflated risk score and $u = \sqrt{U(x,c)}$ the raw predictive uncertainty. For thresholds $0 < \alpha_r < \alpha_{st} < \alpha_{bg} \leq 1$ and $u^* > 0$:

\smallskip
{\small
\begin{tabular}{@{}ll@{}}
$D_R$:\enspace $r \leq \alpha_r \wedge u \leq u^*$                  & $\to$ RELEASE \\[1pt]
$D_{RT}$:\enspace $r \leq \alpha_r \wedge u > u^*$                  & $\to$ RETEST \\[1pt]
$D_{SS}$:\enspace $\alpha_r < r \leq \alpha_{st} \wedge u \leq u^*$ & $\to$ STRESS-SCREEN \\[1pt]
$D_{BG}$:\enspace $\alpha_{st} < r \leq \alpha_{bg}$                & $\to$ LOWER-GRADE BIN \\[1pt]
$D_{REJ}$:\enspace $r > \alpha_{bg}$                                 & $\to$ REJECT \\
\end{tabular}
}
\end{definition}

The RETEST region $D_{RT}$ captures dies whose inflated risk is acceptably low but whose uncertainty $u > u^*$ indicates the model lacks sufficient confidence to release the die without additional evidence. The STRESS-SCREEN region $D_{SS}$ captures dies that are too risky for direct release but whose risk level is consistent with post-burn-in survival~\cite{pecht2009}, providing a recovery path for economically valuable dies.

\subsection{Bayes-Optimal Threshold Derivation}

The disposition thresholds are derived from Bayes-optimal decision making under asymmetric cost~\cite{berger1985,bertsimas1997}. Table~\ref{tab:costs} defines the cost structure. The asymmetry $C_{FR} \gg C_{FRJ}$ is a consequence of chiplet assembly economics: a false release wastes the entire package, while a false rejection wastes only the die~\cite{paniccia2022,dahm2022}.

\begin{table}[t]
\caption{Screening Cost Structure}
\label{tab:costs}
\centering
\small
\begin{tabular}{@{}lll@{}}
\toprule
\textbf{Cost} & \textbf{Symbol} & \textbf{Interpretation}\\
\midrule
False release   & $C_{FR}$  & Package loss + rework\\
False reject    & $C_{FRJ}$ & Yield loss from rejection\\
Stress screen   & $C_{SS}$  & Incremental burn-in cost\\
Lower-grade bin & $C_{BG}$  & Downgrade revenue loss\\
\bottomrule
\end{tabular}
\end{table}

\begin{theorem}[Bayes-Optimal Release Threshold]
Under an asymmetric cost structure with $C_{FR} \gg C_{FRJ}$, the release threshold minimizing expected total screening cost is
\begin{equation}
\alpha_r^* = \frac{C_{FRJ}}{C_{FR} + C_{FRJ}}.
\end{equation}
For the general five-region case, the Bayes-optimal disposition minimizes the expected loss subject to the safety gate constraint, which is never relaxed by cost optimization.
\end{theorem}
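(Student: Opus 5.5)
The argument is the standard two-action Bayes decision rule applied pointwise to each die. Fix a die $d$ with observation $(x,c)$ and write $p = P(Y_d=1 \mid x,c, D_{\text{train}})$ for the posterior failure probability, taken to be the posterior predictive $R(x,c)$ of Definition~3. With only RELEASE and REJECT available, and assuming correct decisions cost nothing (costs measured relative to the correct action), the conditional expected losses are
\begin{equation*}
\mathbb{E}[L(\text{RELEASE})\mid x,c] = C_{FR}\,p, \qquad \mathbb{E}[L(\text{REJECT})\mid x,c] = C_{FRJ}\,(1-p).
\end{equation*}
The expected total screening cost is $\mathbb{E}_{x,c}$ of the loss of the chosen action. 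Choosing the action that minimizes conditional loss for each $(x,c)$ separately therefore minimizes the total cost. This is the usual interchange of minimization and integration; measurability holds because the chosen action is a threshold on the measurable function $R$.

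The next step compares the two losses. RELEASE is optimal exactly when $C_{FR}p \le C_{FRJ}(1-p)$, that is, $p(C_{FR}+C_{FRJ}) \le C_{FRJ}$, which is $p \le C_{FRJ}/(C_{FR}+C_{FRJ}) = \alpha_r^*$. So the optimal policy is a threshold on posterior risk at $\alpha_r^*$. Since $C_{FR}\gg C_{FRJ}$, we get $\alpha_r^*\approx C_{FRJ}/C_{FR}$, which is small, consistent with $\alpha_r<\alpha_{st}$ in Definition~5. Ties at $p=\alpha_r^*$ can be broken either way without changing the cost.

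For the five-region statement, extend the action set to $\{\text{R},\text{RT},\text{SS},\text{BG},\text{REJ}\}$. Each action gets an expected loss that is affine in $p$, for example $C_{SS} + C_{FR}\,p\,q_{SS}$ for stress screening with residual escape fraction $q_{SS}$. The pointwise minimum of finitely many affine functions of $p$ is piecewise affine and concave, and its breakpoints give ordered thresholds $\alpha_r<\alpha_{st}<\alpha_{bg}$. The safety gate is handled as a hard constraint: the admissible action set at $(x,c)$ excludes RELEASE whenever $G=\text{BLOCK}$, and the minimization runs over admissible actions only. The constrained optimum therefore equals $\min(\alpha_r^*, \cdot)$ in effect. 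The release region is the intersection of $\{p\le\alpha_r^*\}$ with the gate's PASS set, so cost optimization can only shrink the release set and never relaxes Theorem~1. The guarantee of Theorem~1 then carries over unchanged to the released set.

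The main obstacle is justifying that the risk score compared against the threshold is the calibrated posterior $p$. Definition~5 thresholds the inflated score $r=R+k\sqrt U$, not $R$. I would first prove the clean result for $p=R$. Then I would note that using $r\ge R$ gives a release set contained in the Bayes-optimal one, so it is suboptimal in cost but safety-preserving. The theorem's claim should be read as the threshold applied to posterior risk, with the inflation absorbed into the gate constraint. A secondary difficulty is pinning down the RETEST cost model so that the five-region thresholds are well ordered. I would either assume RETEST returns a fresh posterior with cost $C_{RT}$, or treat $u^*$ as exogenous and state the ordering only for the $r$-thresholds.
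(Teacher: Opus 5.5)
Your argument is essentially the paper's: both compare the conditional losses $C_{FR}p$ and $C_{FRJ}(1-p)$ to get $\alpha_r^* = C_{FRJ}/(C_{FR}+C_{FRJ})$, extend to five regions by sequential threshold comparisons, and handle the gate by deleting RELEASE from the admissible actions, which can only raise cost and so is never relaxed (your lower envelope of affine losses is a tidier version of the five-region step, though getting three ordered breakpoints needs cost conditions that keep every action on the envelope). The one place you are more careful than the paper is the score being thresholded: the paper plugs the inflated $r = R + k\sqrt{U}$ into the losses as though it were the failure probability, whereas you run the Bayes comparison on the posterior predictive $R$ and correctly note that thresholding $r \ge R$ gives a smaller release set that is cost-suboptimal but still safe.
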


\textit{Proof sketch:} For the binary case, the expected loss for RELEASE is $C_{FR} \cdot r$ and for REJECT is $C_{FRJ}(1-r)$. Setting these equal gives $\alpha_r^* = C_{FRJ}/(C_{FR}+C_{FRJ})$. Extension to five regions solves three sequential threshold comparisons. The safety gate constraint enters by removing the RELEASE option when $G = \text{BLOCK}$ regardless of cost, which can only increase expected loss; hence the constraint cannot be relaxed by the optimizer. \hfill$\square$

\subsection{Retest Value of Information}

\begin{proposition}[Retest Value of Information]
The value of an additional measurement $x'$ is
\begin{align}
\mathrm{VoI}(x) &= C_{FR} \cdot r \cdot P_{\text{rej}}(x') \notag\\
                 &\quad + C_{FRJ}(1-r)\,P_{\text{rel}}(x') - C_{\text{retest}}
\end{align}
where $P_{\text{rej}}(x')$ and $P_{\text{rel}}(x')$ denote the reclassification probabilities to $D_{REJ}$ and $D_R$. Retest is warranted iff $\mathrm{VoI}(x) > 0$.
\end{proposition}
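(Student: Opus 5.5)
We treat Proposition~2 as a standard preposterior (value-of-information) computation for the RETEST region $D_{RT}$. The plan is to compare the expected cost of acting now with the expected cost after observing $x'$ and then choosing the Bayes-optimal disposition. The acting-now cost comes from the binary loss structure in the proof of Theorem~2.

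\textbf{Step 1: the status quo.} Fix a die in $D_{RT}$, so its current inflated score is $r\le\alpha_r$. Without the extra measurement, the die would default to release once the uncertainty condition is overridden. The expected loss is then $C_{FR}\, r$, since a false release occurs with probability $r$. Alternatively, if the default is to hold or reject, the expected loss is $C_{FRJ}(1-r)$.

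\textbf{Step 2: condition on $x'$.} After $x'$, the score updates to $r'$ with $\mathbb{E}[r'\mid x]=r$ by the martingale property of the posterior in Definition~3. The die is then reclassified by Definition~5. We partition the outcome space of $x'$ into three events:
\begin{itemize}
\item reclassification to $D_{REJ}$, with probability $P_{\text{rej}}(x')$;
\item reclassification to $D_R$, with probability $P_{\text{rel}}(x')$;
\item no decision-relevant change.
\end{itemize}
On the first event, a release that would have been a false release is averted, saving $C_{FR}$ weighted by the failure probability $r$. On the second event, a rejection or hold that would have been a false reject is averted, saving $C_{FRJ}$ weighted by $1-r$. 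On the third event, the saving is zero. Summing these contributions and subtracting the deterministic measurement cost $C_{\text{retest}}$ gives the displayed $\mathrm{VoI}(x)$. The \emph{iff} criterion then follows immediately: retest minimizes expected loss exactly when the expected post-measurement loss plus $C_{\text{retest}}$ is below the status-quo loss, which is the condition $\mathrm{VoI}(x)>0$. Ties at zero are resolved toward not retesting.

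\textbf{Main obstacle.} The difficult step is justifying the factorization of the savings terms as $r\cdot P_{\text{rej}}(x')$ and $(1-r)\cdot P_{\text{rel}}(x')$. Exactly, the savings are $C_{FR}\,\mathbb{E}[r'\,\mathbf 1\{D_{REJ}\}]$ and $C_{FRJ}\,\mathbb{E}[(1-r')\,\mathbf 1\{D_R\}]$. These equal the stated products only if $r'$ is uncorrelated with the reclassification event, which is false in general, since rejection correlates with larger $r'$. I would first try to state the result as a first-order approximation, using $\mathbb{E}[r'\mid x]=r$ and treating $r'\approx r$ within each event. That is valid when the measurement perturbs $r$ little relative to the threshold gaps $\alpha_{st}-\alpha_r$. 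Alternatively, I would reinterpret $P_{\text{rej}}$ and $P_{\text{rel}}$ as conditional reclassification probabilities given the true outcome ($Y_d=1$ and $Y_d=0$ respectively). Under that reading, Bayes' rule makes the identity exact: $\mathbb{E}[r'\mathbf 1\{D_{REJ}\}]=P(Y_d=1)\,P(D_{REJ}\mid Y_d=1)=r\,P_{\text{rej}}$. I would adopt this second interpretation, since it gives the displayed formula without approximation. The consistency with the safety gate (Theorem~1) is preserved in either reading, because reclassification into $D_R$ still requires $G=\text{PASS}$.
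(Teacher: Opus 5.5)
Your proposal has a genuine gap, and it is not the factorization you single out as the main obstacle: it is the baseline against which savings are counted. In Step~1 you leave the no-retest action open (release, with loss $C_{FR}\,r$, or hold/reject, with loss $C_{FRJ}(1-r)$), and Step~2 then uses both at once.

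\begin{itemize}
\item The $D_{REJ}$ event is credited with an averted false release, which presupposes the default is release.
\item The $D_R$ event is credited with an averted false reject, which presupposes the default is reject.
\end{itemize}

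Under any single default, one of these two events leaves the action unchanged and saves nothing. You also count only the error avoided when the action switches, not the error newly risked. Turning a release into a reject on $D_{REJ}$ saves $C_{FR}P(Y_d=1,D_{REJ})$ but costs $C_{FRJ}P(Y_d=0,D_{REJ})$.

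Your $Y$-conditional reading of $P_{\text{rej}},P_{\text{rel}}$ is fine, and it turns the two products into $P(Y_d=1,D_{REJ})$ and $P(Y_d=0,D_R)$. With that reading and a release default, the correct preposterior value is
\[
C_{FR}P(Y_d=1,D_{REJ})-C_{FRJ}P(Y_d=0,D_{REJ})-C_{\text{retest}}.
\]
For a reject default it is $C_{FRJ}P(Y_d=0,D_R)-C_{FR}P(Y_d=1,D_R)-C_{\text{retest}}$. The displayed $\mathrm{VoI}(x)$ exceeds these by $C_{FRJ}[P(Y_d=0,D_{REJ})+P(Y_d=0,D_R)]$ and $C_{FR}[P(Y_d=1,D_{REJ})+P(Y_d=1,D_R)]$ respectively. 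It is exactly what you get from the baseline loss $C_{FR}r+C_{FRJ}(1-r)$, which charges both error types to the same die.

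So the displayed formula is only an optimistic upper bound, and the ``iff'' fails in one direction. Take $C_{\text{retest}}$ strictly between the true gross saving and the displayed one. Then $\mathrm{VoI}(x)>0$, yet retesting raises expected loss under either default.

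Two further steps also need repair.
\begin{itemize}
\item The $r$ of Definition~5 is the inflated score $R+k\sqrt{U}$, not a failure probability. The tower identity $\mathbb{E}[r'\mathbf{1}\{D_{REJ}\}]=P(Y_d=1,D_{REJ})$ and the claim $\mathbb{E}[r'\mid x]=r$ hold for the predictive probability $P(Y_d=1\mid x,x')$ under a joint model of $(x,x',Y_d)$. Definition~3, a posterior over $\theta$ given $D_{\text{train}}$, does not by itself supply such a model. Both identities fail in general for $r$ once $k>0$, so you must set $k=0$ or replace $r$ by $R$ throughout. The paper makes the same identification in the proof of Theorem~2.
\item Your first-order fallback defeats itself. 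On $D_{REJ}$ you have $r'>\alpha_{bg}$ while $r\le\alpha_r$, so $r'-r>\alpha_{bg}-\alpha_r$ precisely where $r'\approx r$ is assumed. In the small-perturbation regime where the approximation is valid, $P_{\text{rej}}\approx 0$.
\end{itemize}

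Finally, the paper gives no derivation of Proposition~2 at all. It posits the formula and only remarks that dies near the $D_R/D_{RT}$ boundary have the largest VoI. A rigorous version must therefore choose one of two routes:
\begin{itemize}
\item take the displayed expression as the \emph{definition} of $\mathrm{VoI}$, so the ``iff'' is a decision rule by fiat rather than a Bayes-optimality statement; or
\item fix a single default action, derive the exact expression above, and present the displayed formula as a bound.
\end{itemize}
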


Dies near the $D_R/D_{RT}$ boundary have the highest expected reclassification probability and therefore the highest VoI, consistent with the active-learning finding that uncertainty-sampled examples yield the most decision-relevant information~\cite{settles2009}.

\section{Constrained Closed-Loop Feedback}

\subsection{Post-Assembly Observation and Survivorship Bias}

The risk model of Definition~3 is trained on historical post-assembly outcome data $D_{\text{train}}$. As new chiplet assemblies are released and operated, post-assembly failure observations become available to update the model~\cite{mitchell1997,shalev2014}. However, a critical statistical complication arises: the feedback observations are a \emph{selected} sample specifically, the subset of dies approved by the safety gate of Definition~4. Dies in regions $D_{RT}, D_{SS}, D_{BG}$, and $D_{REJ}$ are not assembled and therefore do not contribute post-assembly outcomes. This is a classic survivorship bias problem~\cite{pearl2009}, which, if uncorrected, will bias the updated model toward optimism.

\begin{definition}[Feedback Observation Set]
After assembly and early-life operation, the feedback corpus at iteration $t$ is
\begin{equation}
O_t = \{(x_d, c_d, y_d) : d \in D_{\text{rel}},\, T_d < \tau_{\text{obs}}\}
\end{equation}
where $\tau_{\text{obs}} \leq \tau_{\text{mission}}$ is the observation horizon. Only released dies contribute to $O_t$, creating the survivorship selection bias.
\end{definition}

\subsection{Constrained Model Update Rule}

\begin{theorem}[Constrained Model Update Rule]
The constrained MAP update at iteration $t$ is
\begin{equation}
\theta_{t+1} = \arg\max_{\theta \in \Theta_{\text{safe}}} \left[\ell_{\text{corr}}(\theta; O_t) + \log p(\theta \mid \theta_t)\right]
\end{equation}
where the survivorship-corrected log-likelihood is
\begin{equation}
\ell_{\text{corr}}(\theta) = \sum_{(x,c,y) \in O_t} w(x,c)\, \log P_\theta(Y{=}y \mid x,c)
\end{equation}
and the safety feasibility set is
\begin{align}
\Theta_{\text{safe}} = \bigl\{\theta \in \Theta : \forall (x,c),\; &P_\theta(Y{=}1 \mid x,c) \notag\\
&+ k\sqrt{\mathrm{Var}_\theta[Y \mid x,c]} \notag\\
&\leq \alpha_{\text{rel}} + \varepsilon_\Delta\bigr\}.
\end{align}
The $\Theta_{\text{safe}}$ constraint ensures $\theta_{t+1}$ satisfies Theorem~1 for all future decisions.
\end{theorem}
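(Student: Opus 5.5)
The statement has two parts. The first is essentially definitional: the update is a constrained MAP problem. The second is a substantive claim: any maximizer $\theta_{t+1}$ lies in $\Theta_{\text{safe}}$ and therefore inherits the guarantee of Theorem~1. I would organize the argument in three steps.

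\emph{Step 1: well-posedness of the update.} I would show that the $\arg\max$ exists, and is unique whenever the prior term is strictly concave. As noted after Definition~3, each $\log P_\theta(Y{=}y\mid x,c)=\log\sigma(\pm\theta^T\phi(x,c))$ is concave in $\theta$. With nonnegative weights $w(x,c)$, the corrected log-likelihood $\ell_{\text{corr}}$ is a nonnegative combination of concave functions, hence concave. Taking the transition prior $p(\theta\mid\theta_t)$ to be Gaussian centred at $\theta_t$ makes the objective strictly concave and coercive. For the feasible set, each pointwise constraint $\theta\mapsto\sigma(\theta^T\phi)+k\sqrt{\sigma(1-\sigma)}$ is continuous, so $\Theta_{\text{safe}}$ is an intersection of closed sets and is closed. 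It is nonempty provided the current iterate $\theta_t$ is feasible, which I would prove by induction on $t$ with the base case $\theta_0\in\Theta_{\text{safe}}$ assumed. A coercive upper-semicontinuous objective on a closed nonempty set attains its maximum.

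\emph{Step 2: the role of $w$.} I would take $w(x,c)=1/P(d\in D_{\text{rel}}\mid x,c)$, the inverse propensity of release, and show that $\mathbb{E}[\ell_{\text{corr}}]$ equals the population log-likelihood over all dies. This follows from the standard inverse-probability-weighting identity, under positivity on the support and the assumption that selection depends only on $(x,c)$. This is the sense in which the survivorship bias of Definition~6 is removed.

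\emph{Step 3: safety transfer.} Since $\theta_{t+1}\in\Theta_{\text{safe}}$ by construction, I would set $R=P_{\theta_{t+1}}$ and $U=\mathrm{Var}_{\theta_{t+1}}$ and re-run the proof of Theorem~1. The constraint gives $R+k\sqrt{U}\le\alpha_{\text{rel}}+\varepsilon_\Delta$, and (A1) then gives $P(Y_d{=}1)\le R+\varepsilon_\Delta$.

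The main obstacle is that Step~3 does not recover Theorem~1's bound verbatim. The constraint is loosened by $\varepsilon_\Delta$, so the naive chain yields $P(Y_d{=}1)\le\alpha_{\text{rel}}+2\varepsilon_\Delta$. The cleaner reading is that the gate of Definition~4 still applies to every future die, so a released die already satisfies $R+k\sqrt{U}\le\alpha_{\text{rel}}$ and Theorem~1 holds verbatim. On this reading, $\Theta_{\text{safe}}$ only guarantees that (A1)--(A2) remain admissible, namely that the updated model is never globally more permissive than the SLA-plus-bias envelope. I would present the result under this interpretation and flag the factor-two alternative.

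A secondary difficulty is that $\mathrm{Var}_\theta[Y\mid x,c]=\sigma(1-\sigma)$ is the Bernoulli variance rather than the posterior-predictive $U$. Verifying (A2) would therefore require assuming it dominates $\mathrm{Var}[R_{\text{true}}\mid x,c]$.
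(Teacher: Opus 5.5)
Your existence argument in Step~1 is sound: the feasible set is closed, the Gaussian transition prior drives the objective to $-\infty$, and nonemptiness is assumed. It is more to the point than the paper's appeal to strong duality, which concerns a zero duality gap rather than attainment.

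\textbf{Uniqueness gap.} You show only that $\Theta_{\text{safe}}$ is closed, and a strictly concave objective can have several maximizers on a nonconvex closed set. The paper's sketch rests on $\Theta_{\text{safe}}$ being convex, and that is the missing step. You can supply it from your own remark that $\mathrm{Var}_\theta[Y\mid x,c]=p(1-p)$ with $p=\sigma(\theta^T\phi(x,c))$:
\begin{itemize}
\item The map $h(p)=p+k\sqrt{p(1-p)}$ is concave on $[0,1]$ with $h(1)=1$. So $h\ge 1$ on its nonincreasing branch, and $h$ is increasing wherever $h<1$.
\item Hence, for $\beta:=\alpha_{\text{rel}}+\varepsilon_\Delta<1$, each pointwise constraint is the half-space $\theta^T\phi(x,c)\le\sigma^{-1}(p_\beta)$, where $p_\beta$ is the root of $h(p)=\beta$ on the increasing branch.
\item $\Theta_{\text{safe}}$ is then an intersection of closed half-spaces, hence closed and convex, and strict concavity gives a unique maximizer.
\end{itemize}
For $1<\beta<\tfrac12(1+\sqrt{1+k^2})$, each constraint instead deletes a slab $s_1<\theta^T\phi(x,c)<s_2$. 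The set is then nonconvex and uniqueness can fail, so the condition $\beta<1$ should be stated. Also, $\Theta_{\text{safe}}$ does not depend on $t$, so your induction reduces to the single assumption $\Theta_{\text{safe}}\neq\emptyset$.

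\textbf{Step~3 reverses the constraint's direction.} (A1) and (A2) are \emph{lower} bounds on $R$ and $U$. In contrast, $\Theta_{\text{safe}}$ imposes an \emph{upper} bound on $P_\theta(Y{=}1\mid x,c)+k\sqrt{\mathrm{Var}_\theta}$ at every $(x,c)$. It forbids the model from ever being conservative, so it can only make the gate of Definition~4 more permissive. It cannot keep (A1)--(A2) admissible.

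In fact your ``naive chain'' applies to every die, released or not. (A1) together with $\theta\in\Theta_{\text{safe}}$ gives $R_{\text{true}}\le\alpha_{\text{rel}}+2\varepsilon_\Delta$ uniformly. That is false as soon as any die is truly high-risk, so the two conditions are generally incompatible.

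The theorem's closing sentence therefore cannot be derived from membership in $\Theta_{\text{safe}}$. The only sound version is your second reading, with (A1)--(A3) imposed as explicit hypotheses on $\theta_{t+1}$. Theorem~1 then applies verbatim through the gate, and the constraint plays no role. The paper gives no argument here either; it treats this, like (C1), as holding by construction. You should present it as an assumption, not a consequence.

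\textbf{Step~2.} It is not needed for the statement, and it fails as written. Given the current model, the gate is a deterministic function of $(x,c)$, so $P(d\in D_{\text{rel}}\mid x,c)\in\{0,1\}$. Positivity fails off the release region, $w\equiv 1$ on $O_t$, and inverse weighting recovers the likelihood only over the released region, not over all dies.
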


\textit{Proof sketch:} The constrained problem is well-posed: the log-posterior is strictly concave in $\theta$, and $\Theta_{\text{safe}}$ is a convex set. Existence and uniqueness follow from strong duality of the constrained concave maximization. \hfill$\square$

\subsection{Survivorship Bias Correction}

\begin{proposition}[Survivorship-Corrected Likelihood]
The selection-corrected log-likelihood uses importance weight
\begin{equation}
w(x,c) = \frac{1}{P(G(x,c,R_\theta, U_\theta) = \text{PASS})}.
\end{equation}
As $\alpha_{\text{rel}} \to 1$, $w(x,c) \to 1$ (no correction needed). As $\alpha_{\text{rel}} \to 0$, $w(x,c) \to \infty$ (each survivor is highly informative).
\end{proposition}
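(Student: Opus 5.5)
The plan is to treat Proposition~4 as a statement about inverse-probability weighting under selection on observables, and to establish it in three steps. The first step shows that the weight yields an unbiased likelihood. The second and third steps examine the two limits of $\alpha_{\text{rel}}$.

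\emph{Step 1: unbiasedness.} Let $S_d=\mathbf{1}[G(x_d,c_d,R_\theta,U_\theta)=\text{PASS}]$ be the selection indicator, and write $\pi(x,c)=P(S=1\mid x,c)$. The randomness in $S$ enters through $R_\theta,U_\theta$ via the posterior over $\theta$ and through $H$. The key modeling fact is that the gate of Definition~4 depends only on $(x,c)$ and model outputs, not on $Y_d$. This gives the missing-at-random condition $S\perp Y\mid (x,c)$. Under it, and assuming $\pi(x,c)>0$ on the support of the target population (positivity), I will compute
\begin{equation*}
\mathbb{E}\bigl[S\,w(x,c)\log P_\theta(Y\mid x,c)\bigr]
=\mathbb{E}_{x,c}\Bigl[\tfrac{\pi(x,c)}{\pi(x,c)}\,\mathbb{E}[\log P_\theta(Y\mid x,c)\mid x,c]\Bigr]
\end{equation*}
by the tower property. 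The right-hand side is the full-population expected log-likelihood. Hence $\ell_{\text{corr}}$ of Theorem~3 is an unbiased estimator of the population log-likelihood. Its maximizer is therefore consistent for the population-optimal $\theta$, which justifies $w=1/\pi$ as the correction. The restriction $T_d<\tau_{\text{obs}}$ in Definition~6 I will treat as a censoring step applied uniformly to released dies, so it does not interact with $w$.

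\emph{Step 2: the limit $\alpha_{\text{rel}}\to1$.} Here I will show that $\pi(x,c)\to1$ pointwise. If $R+k\sqrt{U}<1$ and $H=\text{TRUE}$, the gate condition $R+k\sqrt{U}\le\alpha_{\text{rel}}$ eventually holds, by monotone convergence of the event $\{R+k\sqrt{U}\le\alpha_{\text{rel}}\}$ upward. This requires an explicit caveat: the claim $w\to1$ holds only on the set where $H$ is satisfied and the inflated risk is below~$1$. Otherwise the statement must be read as restricted to dies eligible under (A3).

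\emph{Step 3: the limit $\alpha_{\text{rel}}\to0$, the main obstacle.} The events shrink to $\{R+k\sqrt{U}\le 0\}$. Since $R=\int\sigma(\cdot)\,p(\theta\mid D)\,d\theta>0$ strictly for a logistic likelihood, this limit event is empty, so $\pi\to0$ and $w\to\infty$. The difficulty is that $w$ is only defined on released dies. As $\alpha_{\text{rel}}\to0$, positivity fails and $O_t$ becomes empty, so the statement is meaningful only as a limit of $w(x,c)$ evaluated at fixed $(x,c)$ for which $\pi>0$ at each finite $\alpha_{\text{rel}}$. I will make this precise and note that the variance of the weighted estimator blows up in this regime. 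That is the quantitative content of ``each survivor is highly informative'', and it motivates weight clipping in practice.
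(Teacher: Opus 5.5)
Your Step~1 is the paper's argument made explicit. The paper's sketch says only that the gate depends on $(x,c)$ and not on $Y$, so Bayes' rule leaves a factor $1/P(\text{selected}\mid x,c)$. Your tower-property computation is the cleaner form of this: under $S\perp Y\mid(x,c)$ the conditional law of $Y$ is unchanged, and $1/\pi$ reweights only the covariate distribution. The paper gives no argument at all for the two limits.

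The genuine gap is the nondegeneracy of $\pi$. Both Step~1 (positivity) and Step~3 rest on it, and you justify it incorrectly. In Definitions~3--4, $R$ and $U$ are posterior-predictive summaries with $\theta$ already integrated out; your own Step~3 uses exactly this integrated $R$. $H$ is explicitly a vector of hard \emph{deterministic} guards, so randomness cannot enter ``through $H$''. Given the fitted model, $G$ is a deterministic function of $(x,c)$, and $\pi(x,c)=\mathbf{1}[G(x,c)=\text{PASS}]\in\{0,1\}$. Three things follow:
\begin{itemize}
\item Positivity fails on the whole BLOCK region at every $\alpha_{\text{rel}}$ where the gate does anything. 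Your identity therefore yields only $\mathbb{E}[\mathbf{1}\{G=\text{PASS}\}\log P_\theta(Y\mid x,c)]$, the log-likelihood restricted to the pass region.
\item Every die in $O_t$ has $w=1$ exactly.
\item In Step~3, the set of fixed $(x,c)$ with $\pi>0$ for every $\alpha_{\text{rel}}>0$ is empty. Each die has a fixed $R+k\sqrt{U}>0$ and is simply blocked once $\alpha_{\text{rel}}$ drops below it, so $w$ never diverges on any observed die.
\end{itemize}
Your continuity-of-measure arguments in Steps~2--3 are valid for the \emph{marginal} release rate $P(G=\text{PASS})$, with randomness over the die population. That is not the conditional propensity Step~1 needs. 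The paper's one-line sketch shares this blind spot, but since your argument states positivity explicitly, you have to supply it.

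To close the gap you need an explicitly randomized gate. For example, draw $\theta$ from $p(\theta\mid D_{\text{train}})$ at gate time, so that $\pi(x,c;\alpha_{\text{rel}})=\mathbf{1}[H(x,c)=\text{TRUE}]\,P_\theta\bigl(\sigma(\theta^T\phi(x,c))+k\sqrt{U_\theta}\le\alpha_{\text{rel}}\bigr)$. Step~3 should then use $\sigma(\theta^T\phi)>0$ for every $\theta$ rather than the integrated $R$. You then need two conditions:
\begin{itemize}
\item $\pi(x,c;\alpha_{\text{rel}})>0$ for all $\alpha_{\text{rel}}>0$, for Step~3;
\item inflated risk $<1$ almost surely given $(x,c)$, for Step~2.
\end{itemize}
These pull against each other. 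Take $U_\theta=\mathrm{Var}_\theta[Y\mid x,c]=\sigma(1-\sigma)$ as in $\Theta_{\text{safe}}$, with $k>0$. Then the inflated risk is $\ge 1$ exactly when $\sigma\ge 1/(1+k^2)$. A full-support posterior (as a Gaussian prior gives) secures the first condition but makes $\lim_{\alpha_{\text{rel}}\to 1}\pi<1$, hence $\lim w>1$.

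Separately, your treatment of $\tau_{\text{obs}}$ does not work. Read literally, Definition~6 keeps only released dies with $T_d<\tau_{\text{obs}}\le\tau_{\text{mission}}$, i.e.\ only dies with $Y_d=1$. That is selection on the outcome, and it violates the very condition $S\perp Y\mid(x,c)$ your Step~1 uses. Read instead as all released dies followed up to $\tau_{\text{obs}}$, the recorded label is $\mathbf{1}[T_d<\tau_{\text{obs}}]$, which differs from $Y_d$ whenever $\tau_{\text{obs}}\le T_d<\tau_{\text{mission}}$. Before claiming $\ell_{\text{corr}}$ is unbiased, either assume $\tau_{\text{obs}}=\tau_{\text{mission}}$ with full follow-up, or add a separate censoring correction.
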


\textit{Proof sketch:} By Bayes' rule, and because the gate decision depends on $(x,c)$ but not on $Y$, the conditional likelihood ratio reduces to $1/P(\text{selected} \mid x,c)$. \hfill$\square$

\subsection{Convergence Guarantees}

\begin{theorem}[Feedback Loop Convergence]
Under the constrained update rule of Theorem~3, the sequence $\{\theta_t\}_{t \geq 0}$ satisfies:
\begin{itemize}
  \item[(C1)] \textbf{Safety invariance:} $\theta_t \in \Theta_{\text{safe}}$ for all $t \geq 0$;
  \item[(C2)] \textbf{Monotone uncertainty:} $U_{\theta_t} \geq U_{\theta_{t+1}}$ a.s.\ as $|O_t| \to \infty$;
  \item[(C3)] \textbf{Asymptotic consistency:} $\theta_t \to \theta_{\text{true}}$ in probability if $\theta_{\text{true}} \in \Theta_{\text{safe}}$.
\end{itemize}
\end{theorem}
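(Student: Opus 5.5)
We prove the three claims separately, by induction on $t$ for (C1) and by standard Bayesian asymptotics for (C2) and (C3). Throughout we take as given the well-posedness established in Theorem~3: the update is the unique maximizer of a strictly concave objective over the convex set $\Theta_{\text{safe}}$. We also assume the initialization satisfies $\theta_0 \in \Theta_{\text{safe}}$. This assumption is needed: without it (C1) fails at $t=0$. One way to justify it is to fit $\theta_0$ by the same constrained MAP on $D_{\text{train}}$.

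\textbf{Safety invariance (C1).} This part is essentially definitional. The base case is the initialization assumption. For the inductive step, $\theta_{t+1}$ is defined as an $\arg\max$ over $\Theta_{\text{safe}}$, so $\theta_{t+1} \in \Theta_{\text{safe}}$ whenever the maximizer exists. Existence comes from Theorem~3, which also requires $\Theta_{\text{safe}}$ to be nonempty; here the base case supplies the witness $\theta_0$, so the set is nonempty. The induction then closes. Combined with Theorem~1, this yields the reliability guarantee at every iteration.

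\textbf{Monotone uncertainty (C2).} Here I would use a Laplace approximation of the posterior. Write $\Lambda_t$ for the posterior precision at iteration $t$. With the prior $p(\theta\mid\theta_t)$ centered at the previous posterior, the next precision is
\begin{equation*}
\Lambda_{t+1} = \Lambda_t + \sum_{(x,c,y)\in O_t} w(x,c)\,\sigma_i(1-\sigma_i)\,\phi_i\phi_i^T,
\end{equation*}
where $\sigma_i$ and $\phi_i$ are the sigmoid and feature map evaluated at the $i$-th observation. The weights $w \geq 1$ come from Proposition~3, so the added term is positive semidefinite and $\Lambda_{t+1} \succeq \Lambda_t$ in the Loewner order. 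The delta-method predictive variance is $U_\theta(x,c) \approx \sigma'^2\,\phi^T\Lambda^{-1}\phi$.

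Matrix inversion reverses the Loewner order, so $\Lambda_{t+1}^{-1} \preceq \Lambda_t^{-1}$, which would give $U_{\theta_{t+1}} \leq U_{\theta_t}$. There is a catch: the factor $\sigma'^2$ is evaluated at different MAP points $\theta_t$ and $\theta_{t+1}$. I would absorb this with the qualifier ``as $|O_t|\to\infty$''. In that regime the Laplace approximation becomes exact and the MAP increments vanish, so the change in $\sigma'$ is dominated by the growth of $\Lambda$.

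\textbf{Asymptotic consistency (C3).} The unconstrained, importance-weighted MAP is consistent by standard M-estimation arguments. The ingredients are the following:
\begin{itemize}
\item Proposition~3 makes the weighted log-likelihood an unbiased estimate of the population log-likelihood.
\item Log-concavity together with identifiability of $\phi$ gives a unique population maximizer $\theta_{\text{true}}$.
\item A uniform law of large numbers holds on compacts, using the sub-Gaussian assumption from the Proposition~1 proof.
\item The prior term is $O(1)$, so its influence vanishes relative to a likelihood of order $|O_t|$.
\end{itemize}
Convexity then gives the constrained result. If $\theta_{\text{true}} \in \Theta_{\text{safe}}$, the constrained maximizer of a strictly concave criterion converges to the constrained population maximizer, and that is $\theta_{\text{true}}$ itself. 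This follows from argmax continuity for concave criteria, since pointwise convergence of concave functions upgrades to uniform convergence on compacts.

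\textbf{Main obstacle.} The hard step is the selection structure in (C3). Feedback comes only from released dies, so the support of the weighted empirical measure is confined to the PASS region. Identifying $\theta_{\text{true}}$ therefore requires $w$ to be finite, i.e.\ $P(\text{PASS}\mid x,c)>0$ on the support of the target distribution. This is a positivity or overlap condition. I would state it explicitly as an added hypothesis: positive pass probability, or randomized release of a small fraction of blocked dies. Without such a hypothesis, only consistency on the released region can be claimed.
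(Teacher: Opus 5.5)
Your decomposition matches the paper's: (C1) by construction of the constrained $\arg\max$, (C2) from monotone growth of posterior (Fisher) information, and (C3) from constrained MAP consistency. Your hypothesis $\theta_0 \in \Theta_{\text{safe}}$ is a genuine repair that the paper's ``by construction'' silently needs at $t=0$.

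Where you go astray is the step you flag as the main obstacle. As the paper notes in proving Proposition~3, the gate depends on $(x,c)$ but not on $Y$. So selection is ignorable for the conditional model: $P(Y \mid x,c,\text{PASS}) = P(Y \mid x,c)$. Under the correctly specified logistic model in which $\theta_{\text{true}}$ is defined, Gibbs' inequality says $\theta_{\text{true}}$ maximizes $E[\log P_\theta(Y\mid x,c)\mid x,c]$ at every $(x,c)$ simultaneously. It therefore maximizes the population version of $\ell_{\text{corr}}$ on the released region for any nonnegative weights; for the deterministic gate of Definition~4, $w\equiv 1$ on released dies anyway. The maximizer is unique once $E[\phi\phi^T\mathbf{1}\{\text{PASS}\}] \succ 0$, i.e.\ the released region has positive probability and does not lie on a hyperplane in feature space. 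A logistic model pinned down on such a region is pinned down everywhere. So your claim that without overlap ``only consistency on the released region can be claimed'' is false. The overlap hypothesis needlessly weakens what you prove; the condition actually required is this nondegenerate-design one.

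Your remedies also clash with the framework. Under the deterministic gate, positivity on the whole support means every die passes. Randomly releasing blocked dies overrides the gate and forfeits Theorem~1's per-die guarantee for exactly those dies. Overlap and inverse-propensity weights matter only under misspecification, where $\theta_{\text{true}}$ must be reread as a pseudo-true parameter.

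A smaller point in (C2): if $t$ is held fixed and only $|O_t|\to\infty$, the MAP increment does not vanish, since $\theta_{t+1}$ tends to $\theta_{\text{true}}$ rather than to $\theta_t$. You do not need it. We have $\sigma'^2 \le 1/16$ everywhere. The same design condition gives $\Lambda_{t+1} \succeq \kappa\,|O_t|\,I$ for some $\kappa>0$ with probability tending to one. Hence $U_{\theta_{t+1}}(x,c) \le \|\phi(x,c)\|^2/(16\kappa|O_t|) \to 0$, which eventually falls below the fixed positive value $U_{\theta_t}(x,c)$. This yields (C2) pointwise in $(x,c)$; uniformity would also need bounded features.
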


\textit{Proof sketch:} (C1) holds by construction since $\Theta_{\text{safe}}$ is enforced as a hard constraint at every iteration. (C2) follows from Fisher information monotonicity of the Bayesian posterior. (C3) is a standard consequence of constrained MAP consistency~\cite{shalev2014}. \hfill$\square$

Theorem~4 has a critical practical implication: property (C1) holds even if $\theta_{\text{true}}$ lies near or on the boundary of $\Theta_{\text{safe}}$. An unconstrained online learner might oscillate across the boundary, producing iterations in which the reliability guarantee is violated. Property (C2) provides the formal basis for the intuition that screening improves as post-assembly data accumulates: the conservative uncertainty buffer in the gate shrinks monotonically, allowing an increasing fraction of reliable dies to be released over time.

\section{Monte Carlo Simulation Study}

\subsection{Experimental Setup}

To verify the four theoretical properties under controlled synthetic conditions, we conduct a Monte Carlo simulation study~\cite{mccluskey2001,huang2022}. Feature distributions are calibrated to open-literature chiplet process characterization data~\cite{ramachandran2022,wang2024ucie,mandalapu2024}. Bump coplanarity deviation follows $\text{Clip}(\mathcal{N}(0, 2.0), {-5}, 5)\,\mu\mathrm{m}$; thermal resistance deviation follows $\text{Clip}(\mathcal{N}(0, 1.5), {-4}, 4)$ K/W; log-leakage follows $\mathcal{N}(0, 1.0)$; IDDQ anomaly score follows $\mathcal{N}(0, 0.8)$; die bow follows $|\mathcal{N}(0, 2.5)|\,\mu\mathrm{m}$; and scan delay skew follows $\mathcal{N}(0, 0.6)$. The true parameter vector is $\theta_{\text{true}} = [-1.8, 0.50, 0.42, 0.18, 0.14, 0.22, 0.09]^T$, yielding a base post-assembly failure rate of approximately 14\%, consistent with reported early-life failure rates in advanced chiplet packaging before reliability screening~\cite{jedec2020}.

\subsection{Risk Model Validation}

The fitted logistic model achieves good calibration against the oracle risk for $N = 500$ test dies. To introduce the observability gap, structural and test-signature features (die bow and scan delay) are set to zero, simulating a scenario in which these measurements were not taken pre-assembly~\cite{chakrabarty1999}. A systematic upward and downward bias becomes visible. From the resulting bias distribution, the 95th-percentile observability bound is estimated as $\varepsilon_\Delta = 0.133$, used in all subsequent safety gate computations.

\subsection{Safety Gate Guarantee Verification}

Gate thresholds $\alpha_{\text{rel}} \in [0.10, 0.80]$ are swept across $N = 3{,}000$ evaluation dies. The observed post-assembly failure rate of released dies lies strictly below the theoretical bound $\alpha_{\text{rel}} + \varepsilon_\Delta$ at every tested threshold, confirming that Theorem~1's guarantee is numerically tight. Tightening $\alpha_{\text{rel}}$ from $0.80$ to $0.10$ reduces the release rate from near 100\% to approximately 5\%.

\subsection{Disposition Region Verification}

The five-region disposition map for $N = 800$ evaluation dies is plotted in the risk-uncertainty plane. With $C_{FR} = 10$ and $C_{FRJ} = 2$, Theorem~2 gives $\alpha_r^* = 0.167$. At this conservative threshold, the majority of dies fall into the stress-screen, lower-grade bin, or reject regions~\cite{dahm2022}. Plotting $\alpha_r^*$ against the cost ratio $C_{FR}/C_{FRJ}$ confirms that tighter thresholds are required as the relative cost of false release grows.

\subsection{Feedback Loop Convergence}

Starting from a deliberately misspecified prior dominant feature weights reduced to 25\% of their true values and the intercept lowered by $0.8$ the constrained MAP update of Theorem~3 is applied for $T = 20$ iterations with 200 cumulative dies per iteration~\cite{metku2017adaptive,metku2015multistage}. Parameter error $\|\theta_t - \theta_{\text{true}}\|_2$ decreases from $1.35$ to $0.78$ (42\% reduction), confirming (C3). The safety-compliant fraction remains above 98\% at every iteration, confirming (C1). Mean model uncertainty decreases monotonically, confirming (C2).

\subsection{Summary}

\begin{table}[t]
\caption{Simulation Study Summary}
\label{tab:sim}
\centering
\small
\begin{tabular}{@{}p{0.30\columnwidth}p{0.38\columnwidth}c@{}}
\toprule
\textbf{Property} & \textbf{Result} & \textbf{Status}\\
\midrule
Calibration (Def.~3)       & $\varepsilon_\Delta = 0.133$                           & \checkmark\\
Bias bound (Prop.~1)       & Bias $\leq \varepsilon_\Delta$ at 95\%                 & \checkmark\\
Safety gate (Thm.~1)       & Fail rate $< \alpha_{\text{rel}} + \varepsilon_\Delta$ & \checkmark\\
Optimal threshold (Thm.~2) & $\alpha_r^* = 0.167$                                   & \checkmark\\
Convergence (Thm.~4)       & C1, C2, C3 satisfied                                   & \checkmark\\
\bottomrule
\end{tabular}
\end{table}

\section{Discussion and Limitations}

\subsection{Scope of Validation}

The simulation study verifies internal consistency of the theoretical framework, but real-world deployment will require validation against actual chiplet process data across multiple foundries and packaging technologies~\cite{huang2022}. The bounds and constants derived here provide quantitative design targets to guide that experimental effort. A companion paper reporting experimental results on real ATE/PTE fab data is planned as the next phase of this work.

\subsection{Model Expressivity}

The logistic model is chosen for its tractability and interpretability. Gaussian process classifiers provide a natural Bayesian uncertainty quantification framework with closed-form posterior updates~\cite{williams2006}, and deep neural feature maps can capture nonlinear interactions that a linear logistic model misses~\cite{goodfellow2016}. Any substitute model must produce a calibrated uncertainty estimate satisfying condition (A2) of Theorem~1, and have a log-concave likelihood to ensure the constrained update of Theorem~3 is tractable~\cite{bishop2006}.

\subsection{Non-Stationary Context and Multi-Die Dependencies}

The framework treats the package context vector $c$ as fixed and fully known. In production, $c$ may be partially uncertain, introducing a second layer of observability gap beyond $\Delta$. Furthermore, the KGRD framework currently treats dies independently. In 2.5D and 3D packages, the thermal and mechanical state of one die influences the reliability of its neighbors through substrate bending and heat spreading~\cite{blanton2022,liu2021}, introducing conditional dependence that requires a joint risk model over die tuples~\cite{koller2009}.

\subsection{Active Screening Design}

Proposition~2's Value of Information framework opens a path toward active screening design: rather than applying a fixed test sequence to all dies, the framework can prioritize additional tests based on their expected VoI~\cite{settles2009}. This connects KGRD screening to the broader literature on Bayesian experimental design, and suggests a formulation of joint test content selection and reliability disposition as a single optimization problem.

\section{Conclusion}

This paper has established the first formally grounded treatment of Known Good Reliable Die screening for chiplet-based AI SoCs~\cite{cunningham1994,bhatt1997}. Beginning from the observation that conventional KGD screening addresses correctness but not reliability, we formalized die screening as a constrained inference problem over an observability-bounded pre-assembly measurement space. Four interlocking contributions emerged: a Bayesian posterior risk model with a quantified observability bias bound, a safety-gated architecture with a provable post-assembly failure probability guarantee, Bayes-optimal disposition thresholds derived from asymmetric misclassification cost, and a constrained closed-loop feedback rule that achieves safety invariance, monotone uncertainty reduction, and asymptotic consistency simultaneously. A Monte Carlo simulation study on $N = 4{,}000$ synthetic dies confirmed all theoretical properties and demonstrated that the safety guarantee holds uniformly across the full range of gate thresholds tested.

The theoretical separation between statistical estimation and deterministic release authorization ensures that no predictive model can unilaterally approve a die for integration into a high-value chiplet assembly. The safety gate converts unavoidable model uncertainty into a conservative safety margin formally bounded by Proposition~1 and Theorem~1 working in concert.

The present work is intended as the first in a two-part series. This paper establishes the formal theoretical foundation   the definitions, theorems, proofs, and synthetic verification   that makes KGRD screening a mathematically rigorous discipline. A follow-on paper is planned to report experimental validation of the complete framework on real ATE/PTE data from production-grade 2.5D AI chiplet assemblies, including empirical estimation of $\varepsilon_\Delta$ from field return analysis, safety gate calibration against actual post-assembly burn-in outcomes, and closed-loop feedback convergence measured over multiple production lots.


\end{document}